\DeclareMathOperator*{\slim}{s--lim}
\newcommand{\Ran}{{\operatorname{Ran}}}
\newcommand{\ad}{{\operatorname{ad}}}
\newcommand{\N}{{\mathbb{N}}} 
\newcommand{\R}{{\mathbb{R}}} 
\newcommand{\C}{{\mathbb{C}}}
 \renewcommand{\c}{{\rm c}}
\newcommand{\e}{{\rm e}} \newcommand{\ess}{{\rm ess}}
 \renewcommand{\i}{{\rm i}}
\renewcommand{\d}{{\rm d}}
\newcommand{\pupo}{{\rm pp}}
\renewcommand{\Re}{{\rm Re}\,} \renewcommand{\Im}{{\rm Im}\,}
\DeclarePairedDelimiter\inp\langle\rangle
\newcommand\parb[2][]{#1 \big ( #2#1\big )} \newcommand\parbb[2][]{#1
  \Big ( #2#1\Big )}
\newcommand{\mand}{\text{ and }}
 \newcommand{\bD}{{\bf D}}
 \newcommand{\bX}{{\bf X}}
\newcommand{\vA}{{\mathcal A}} \newcommand{\vB}{{\mathcal B}}
 \newcommand{\vH}{{\mathcal H}}
\newcommand{\vT}{{\mathcal T}} \newcommand{\vU}{{\mathcal U}}
\newcommand{\vV}{{\mathcal V}}
\newcommand\Step[1]{
  \par\bigskip
  \noindent
  \textit{#1}.\enspace
}
\newcommand\subStep[1]{
  \par\bigskip
  \noindent
  \underline{\textit{#1}}:\enspace
}
\theoremstyle{plain}
\newtheorem{thm}{Theorem}[section]
\newtheorem{proposition}[thm]{Proposition}
\newtheorem{lemma}[thm]{Lemma} 
\theoremstyle{definition}
 \newtheorem{cond}[thm]{Condition}
 \newtheorem*{remarks*}{Remarks}
\newtheorem*{remark*}{Remark}
\numberwithin{equation}{section}
\title[Rellich's theorem and $N$-body
systems]{Rellich's theorem and $N$-body
 Schr\"{o}dinger operators}
\thanks{
K.I. is supported by JSPS KAKENHI grant nr. 25800073.\, 
E.S. is supported by  DFF grant nr.  4181-00042.
}
\author{K. Ito}
\address[K. Ito]{Department of Mathematics, Kobe University\\
1-1 Rokkodai, Nada, Kobe, 657-8501, Japan}
\email{ito-ken@math.tsukuba.ac.jp}
\author{E. Skibsted} \address[E. Skibsted]{Institut for Matematiske
  Fag \\
  Aarhus Universitet\\ Ny Munkegade 8000 Aarhus C, Denmark}
\email{skibsted@imf.au.dk}
\begin{document}
\begin{abstract} We show an optimal version of Rellich's theorem  for
   generalized $N$-body  Schr\"{o}dinger operators. It applies to
   singular potentials, in particular to 
 a model  for   atoms and molecules with infinite mass and finite
  extent nuclei. Our proof relies on a  Mourre estimate \cite{IS} and
  a functional calculus  localization technique.
\end{abstract}

\maketitle

\section{Introduction  and results}\label{sec:introduction}
 Consider for given disjoint $R_1,\dots ,R_K\in \R^d$ the $N$-body Schr\"{o}dinger operator 
\begin{align}\label{eq:Hami_1M}
  H=\sum_{j = 1}^N \Big (-\frac{1}{2m_j}\Delta_{x_j}+\sum_{1\le k \le K}V^k_{j}(x_j-R_k)\Big ) + \sum_{1 \le i<j \le N} V_{ij}(x_i
  -x_j)
\end{align} describing  a system of $N$ $d$-dimensional particles in  $\Omega_1=\R^d\setminus
    \overline\Theta $,  where $\Theta=\cup_{1\le k \le K}
    \Theta_k$ for  given  open and bounded  (possibly empty) subsets  $\Theta_1,\dots, \Theta_K $
    of $
\mathbb{R}^d$. For those   (possibly existing) 
$k=1,\dots, K$  where $ \Theta_k\neq\emptyset$ we demand  that 
    $R_k\in  \Theta_k$, and for $N=1$ the last term to the right in
\eqref{eq:Hami_1M} is omitted. Furthermore we  assume that $d\geq2$ and that $V^{k}_{j}(y)=q_jq^k |y|^{-1}$
    and $V_{ij}(y)=q_iq_j|y|^{-1}$ (Coulomb potentials). We consider $H$ as an operator on the Hilbert space $\vH=L^2(\Omega)$,
$\Omega=(\Omega_1)^N$,  given  by imposing the
Dirichlet boundary condition.  In \cite{IS} we proved that the
    set of eigenvalues and thresholds (that is  eigenvalues for
    sub-Hamiltonians) is closed and countable and that
    any $L^2$-eigenfunction corresponding to a non-threshold eigenvalue
    is exponentially decaying. Under the additional condition that the
    exterior set $\Omega_1$  is 
    connected we proved that $H$ does not have positive
    eigenvalues.

In this paper we prove a version of Rellich's theorem
applicable  to this and other  generalized  $N$-body models.
 Letting $B_n=\{x\in \Omega|\,|x|<n\}$  we   introduce the Besov space $B_0^*$ of functions $\phi$ on $\Omega$ such that $1_{B_n}
\phi\in \vH$ for all $n\geq 1$  and 
\begin{subequations}
\begin{align}\label{eq:B0}
  \lim_{n\to \infty} n^{-1}\| 1_{B_n}\phi\|^2=0.
\end{align} Our version of Rellich's
theorem for the atomic  physics model \eqref{eq:Hami_1M}  reads:
\begin{proposition}\label{thm:Rehard} Suppose $H$ is given by 
  \eqref{eq:Hami_1M} with Coulomb potentials, and that $\phi$ is a  generalized
  eigenfunction  in $B_0^*$ fulfilling the Dirichlet boundary condition on $\Omega$ and corresponding to 
   a real
non-threshold eigenvalue. Then  $\phi\in\vH$ and therefore $\phi$ is a
genuine  eigenstate (in fact
$\e^{\epsilon |x|}\phi(x)\in L^2$ for some $\epsilon>0$).
  \end{proposition}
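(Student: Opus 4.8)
The plan is to reduce the whole statement to the single assertion $\phi\in\vH$: once that is known, $\phi$ is a genuine eigenfunction attached to a non-threshold eigenvalue, and the exponential bound $\e^{\epsilon|x|}\phi\in L^2$ is precisely the decay theorem of \cite{IS}. So the entire content is the passage from the Besov bound $\|1_{B_n}\phi\|^2=o(n)$ to $\|\phi\|<\infty$. The first ingredient is an energy localization by functional calculus. From the Mourre estimate of \cite{IS} one gets the limiting absorption principle at $\lambda$, i.e. uniform boundedness of $\langle x\rangle^{-s}(H-z)^{-1}\langle x\rangle^{-s}$ as $z\to\lambda$, $\Im z\ne0$, for $s>1/2$. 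Using this together with an almost-analytic (Helffer--Sjöstrand) functional calculus, I would show that for $f\in C_c^\infty(\R)$ supported in the non-threshold interval around $\lambda$ the operator $f(H)$ is bounded on $B^*$ and on $B_0^*$, and that $f(H)\phi=f(\lambda)\phi$; choosing $f\equiv1$ near $\lambda$ then gives $f(H)\phi=\phi$. This legitimizes inserting the cutoff $f(H)$ into the commutator computations below, so that the Mourre lower bound applies to the states that occur there.

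For the core argument, let $A$ be the conjugate operator of \cite{IS} and recall the Mourre estimate $f(H)\,\i[H,A]\,f(H)\ge c\,f(H)^2+K_0$ with $c>0$ and $K_0$ compact, valid for $f$ supported near the non-threshold energy $\lambda$. Fix $\chi\in C_c^\infty([0,\infty))$ with $\chi\equiv1$ near $0$, and for $R\ge1$ consider the bounded, self-adjoint operator
\[
  \Theta_R=f(H)\,\chi(|x|/R)\,A\,\chi(|x|/R)\,f(H).
\]
Since $(H-\lambda)\phi=0$ and $\Theta_R\phi$ lies in $\cap_n\mathcal D(H^n)$, the virial identity $\langle\phi,\i[H,\Theta_R]\phi\rangle=0$ holds after the standard duality justification (the vectors that are paired lie in the predual $B$). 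Expanding the commutator, using $[H,f(H)]=0$, produces a principal term $\langle\psi_R,\i[H,A]\psi_R\rangle$ with $\psi_R=\chi(|x|/R)f(H)\phi$, bounded below by $c\|\psi_R\|^2$ plus a $K_0$-contribution, together with remainder terms coming from commuting $H$ through the cutoffs $\chi(|x|/R)$; the latter are first-order differential operators supported in $\{|x|\gtrsim R\}$ and carry a factor $R^{-1}$. Rearranging gives a coercive bound which, fed back on itself so as to absorb the lower-order remainders using $c>0$, and combined with the input $\|1_{B_n}\phi\|^2=o(n)$, forces $\sup_{R\ge1}\|\chi(|x|/R)\phi\|<\infty$, that is $\phi\in\vH$. (Running the same scheme with $\Theta_R$ modified by weights $\langle x\rangle^{s}$ and inducting on $s$ even gives $\langle x\rangle^s\phi\in\vH$ for every $s$, but $\phi\in\vH$ together with \cite{IS} already closes the argument, exponential decay included.)

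I expect two genuine obstacles. First, the domain and commutator bookkeeping forced by the Dirichlet condition on $\Omega$ and by the Coulomb singularities at the $R_k$ and at particle coincidences: every commutator with the obstacle-adapted $A$, with the cutoffs $\chi(|x|/R)$, and with $f(H)$ must be shown bounded --- or form-bounded with controllably small bound --- on the relevant weighted spaces, which is exactly where the hard estimates of \cite{IS} are needed, and where the hypothesis $\phi\in B_0^*$, rather than merely $\phi\in B^*$, is used both to start the iteration and to rule out the ``free'' spherical-wave solutions. Second, the compact error $K_0$ in the Mourre estimate must be rendered harmless; the remedy I would adopt is to run the whole argument with the cutoff supported near infinity in place of $\chi(|x|/R)$, so that $K_0$ localizes into a bounded region where it is negligible and the Coulomb tails are small, leaving a strict positive-commutator bound for the state that drives the estimate.
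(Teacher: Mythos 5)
Your reduction is the same as the paper's: everything hinges on getting $\phi\in\vH$, after which exponential decay is quoted from \cite{IS}, and the identity $f(H)\phi=f(E)\phi=\phi$ for $f$ supported near the non-threshold energy is also how the paper starts (though it obtains boundedness of $f(H)$ on the Besov spaces directly from the Helffer--Sj\"ostrand calculus and commutation lemmas, not from a limiting absorption principle, which for these singular hard-core Hamiltonians the paper explicitly declines to justify). The gap is in your core step. With $\Theta_R=f(H)\chi(|x|/R)A\chi(|x|/R)f(H)$ the remainder terms produced by commuting $H$ through the cutoffs do \emph{not} carry an effective factor $R^{-1}$: the conjugate operator $A$ is of size $r\sim R$ on the support of $\chi'(|x|/R)$, so the $R^{-1}$ from the derivative of the cutoff is cancelled and the remainders are $O(1)$ quadratic forms localized in the annulus $\{|x|\sim R\}$. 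The virial identity then only yields an inequality of the schematic form $c\,\|1_{B_R}\phi\|^2\leq C\parb{\|1_{B_{2R}}\phi\|^2-\|1_{B_{R/2}}\phi\|^2}+C_K$, and under the hypothesis $\|1_{B_n}\phi\|^2=o(n)$ this is consistent with unbounded growth (e.g. $\|1_{B_n}\phi\|^2\sim n/\log n$ satisfies it), so no amount of ``feeding the estimate back on itself'' forces $\sup_R\|1_{B_R}\phi\|<\infty$. The scheme reproduces the $B_0^*$ information and nothing more; this is exactly why the naive Froese--Herbst-type virial, which does give exponential decay of genuine $L^2$-eigenfunctions, does not give the $B_0^*\Rightarrow L^2$ implication.

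What is missing is the mechanism by which the paper gains decay: it replaces $A$ by the (energy-localized, bounded) radial velocity $B=\bD r$, localizes sharply near $B\approx 0$ with the specially constructed cutoff $\zeta_\epsilon(B)=\eta_\epsilon(B)-\eta_\epsilon(-B)$ satisfying \eqref{eq:fundest}, and exploits the identity \eqref{eq:formulasy}, $\bD B=r^{-1/2}\parb{\bD A-B^2}r^{-1/2}+O(r^{-3})$, so that the Mourre positivity \eqref{eq:mour7d} is harvested \emph{with an extra weight} $r^{-1/2}$ on each side while the $-B^2$ term is dominated by $\epsilon^2$ on the support of $\eta'_\epsilon$. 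With the regularized weights $X_\kappa^{1/4}$ (then $X_\kappa^{1/2}$) in the propagation observable, the error terms come with weights $X^{-3/4}$ on both sides and are therefore finite for $\phi\in B_0^*\subset L^2_{-3/4}$, while the main term controls $\|X^{-1/4}\phi\|$; a second pass upgrades $L^2_{-1/4}$ to $L^2$. This two-step bootstrap through fractional weights, together with the $\chi_m K\chi_m$ approximation of the compact error (rather than pushing the cutoff to infinity), is the substance of the proof and is absent from your outline; without some replacement for it the passage from $o(n)$ to boundedness does not go through.
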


 We note that Isozaki in \cite{Is} showed for a class of smooth
 potentials in the
context of usual $N$-body operators (i.e. $N$-body operators without hard-core interaction)  that  any  non-threshold
generalized eigenfunction  in the weighted space 
$L^2_{\epsilon-1/2}=\inp{x}^{1/2-\epsilon}L^2$ must be in $L^2$ (here
the weight $\inp{x}=(1+|x|^2)^{1/2}$ and $\epsilon>0$). This  type
of 
problem  goes  back to Rellich \cite {Re}. The optimal one-body result is that any    non-threshold generalized eigenfunction  in
$B_0^*$ must be in $L^2$, cf. 
\cite{L1, L2, AH,Ho2}. The achievement  of this paper is three-fold. Whence we
obtain a version of Rellich's theorem for  $N$-body operators 1) with
singular pair potentials (with or without hard-core interaction), 2) stated with
the optimal space, and finally 3) with a proof that appears more
transparent (we think) than the one in \cite{Is}.

Introducing  the Besov space $B^*$ of functions $\phi$ on $\Omega$ such that $1_{B_n}
\phi\in \vH$ for all $n\geq 1$  and 
\begin{align}\label{eq:B}
  \sup_{n\geq 1 } n^{-1}\|1_{B_n}\phi\|^2<\infty,
\end{align}
\end{subequations}  we note for comparison that
$L^2_{\epsilon-1/2}\subseteq L_{-1/2}^2\subseteq
B_0^*\subseteq B^*\subseteq L_{-1/2-\epsilon}^2$.

The space $  B^*$ is a natural space for  generalized eigenfunctions. In
fact the generalized eigenfunctions in  the range of the delta function
\begin{align*}
  \delta(H-E)= \lim_{\epsilon\to 0}\,\tfrac 1\pi\Im(H-E-\i\epsilon)^{-1},
\end{align*}
 defined on yet another
Besov space $B\subseteq \vH$ 
 for any non-threshold and non-eigenvalue
$E$,  are  in $  B^*$ and they  span
the continuous subspace of $H$.  We shall not here elaborate on the
precise meaning of the above limit. For usual $N$-body operators an
optimal form of the limiting absorption principle was given in
\cite{JP}. For a version of LAP that applies to  more singular 
Hamiltonians (in particular  hard-core Hamiltonians) we refer to \cite[Appendix
B]{IS}. We claim, although here be unjustified, that the sharp form of
LAP is valid for the
Hamiltonians considered in \cite{IS} as well,  in particular for the one  of Proposition 
\ref{thm:Rehard}.  Whence for any non-threshold and non-eigenvalue
$E$ Proposition 
\ref{thm:Rehard} provides the following dichotomy for the mapping
$\delta(H-E):B\to B^*$. For $f\in B$ either
$\phi=\delta(H-E)f\in B^*\setminus B^*_0$ or $\phi=0$. This is a
general feature for $N$-body Hamiltonians  due to the present  paper. However, this
being said,  the generalized
eigenfunctions in  $B^*$ are poorly understood, see \cite{Ya} for some results. In particular it
is not known if $\Ran \parb{\delta(H-E)}$ span all generalized eigenfunctions in
$B^*$ (fulfilling the Dirichlet boundary condition for hard-core  Hamiltonians). It is believed  to be true and it is believed that all scattering information
is encoded in this space  (confirmed for the one-body
problem).

\subsection{Generalized $N$-body
  models}\label{subsec:Usual generalized $N$-body
  systems}  We will work in a generalized framework.  This is given by
a real finite dimensional vector space $\bX$ with an inner product   and a
finite family of subspaces $\{\bX_a| \ a\in \vA\}$ closed with respect
to intersection. We  refer to the elements of $\vA$ as {\it cluster
decompositions} although this terminology is not  motivated here. The orthogonal complement of $\bX_a$ in $\bX$ is
denoted $\bX^a$, and correspondingly we decompose $x=x^a\oplus x_a\in
\bX^a\oplus \bX_a$. We order $\vA$ by writing $a_1\subset a_2$ if
$\bX^{a_1}\subset \bX^{a_2}$. It is assumed that there exist
$a_{\min},a_{\max}\in \vA$ such that $\bX^{a_{\min}}=\{0\}$ and
$\bX^{a_{\max}}=\bX$. Let $\vB=\vA\setminus\{a_{\min}\}.$ 
  The length
of a chain of cluster decompositions $a_1\subsetneq \cdots \subsetneq
a_k$ is the number $k$. Such a  chain is said to connect $a=a_1$ and
$b=a_k$. The maximal length of all chains connecting a given $a\in
\vA\setminus\{a_{\max}\}$ and $a_{\max}$ is denoted by $\# a$. We
define $\# a_{\max}=1$ and denoting $\# a_{\min}=N+1$ we say the
family $\{\bX^a|a\in\vA\}$ is of {\it$N$-body type}.

To define  the generalized hard-core  model more structure is needed:
For each  $a\in \vB$ there is given an open subset
$\Omega_a\subset \bX^a$ with $\bX^a\setminus \Omega_a$ compact,
possibly $\Omega_a= \bX^a$. Let for $a_{\min}\neq b\subset a$
\begin{equation*}
  \Omega^a_b=\parb{ \Omega_b+\bX_b}\cap \bX^a=\Omega_b+\bX_b\cap \bX^a,
\end{equation*} and for $a \neq a_{\min}$ 
\begin{equation*}
  \Omega^a=\cap_{a_{\min}\neq b\subset a} \Omega^a_b.
\end{equation*} We define $\Omega^{a_{\min}}=\{0\}$ and $
\Omega=\Omega^{a_{\max}}$. 

Our ``soft potentials'' fulfill:

\begin{subequations}
\begin{cond}\label{cond:smooth2} 
  There exists $\varepsilon >0$ such that for all $b\in\vB$ there is a
  splitting into (real-valued) terms $V_b=V_b^{(1)}+V_b^{(2)}$, where
  \begin{enumerate}[label=(\arabic*)]
  \item \label{item:cond12b} $V_b^{(1)}$ is smooth on the closure of
    $\Omega_b$ and
    \begin{equation}
      \label{eq:1k2}
      \partial ^\alpha_yV_b^{(1)}(y)=O\big(|y|^{-\varepsilon-|\alpha|}\big ).
    \end{equation}
  \item \label{item:cond122} $V_b^{(2)}$ vanishes outside a bounded
    set in $\Omega_b$, and
    \begin{equation}
      \label{eq:2k2}
      V_{b}^{(2)}:H^1_0(\Omega_b)\to H^1_0(\Omega_b)^*\text{ is compact}.
    \end{equation}
  \end{enumerate}
\end{cond}  
\end{subequations}

We consider for $a\in\vB$ the Hamiltonian 
\begin{align*}
  H^a=-\tfrac 12
\Delta_{x^a} +V^a,\quad V^a(x^a)=\sum_{b\subset a}   V_{b}(x^{b}),
\end{align*}
 on the Hilbert space  $L^2(\Omega^a)$ with the Dirichlet boundary
condition on $\partial \Omega^a$, in particular
\begin{align*}
 H=H^{a_{\max}}=\tfrac 12 \Delta+V=H_0+V
\text{ on }\vH:=L^2(\Omega) 
\end{align*}
 with the Dirichlet boundary condition on $\partial
\Omega$.  More precisely the Hamiltonian $H^a$, henceforth referred to
as a
hard-core  Hamiltonian, is given by its quadratic form. The form  domain is the
standard Sobolev space $H^1_0(\Omega^a)$, and the corresponding action
is
the (naturally defined) 
Dirichlet form. Due to the continuous
embedding $H_0^1(\Omega^a)\subset H_0^1(\Omega^a_b)$ for
$a_{\min}\neq b\subset a$ we conclude that indeed $H^a$ is
self-adjoint, cf. \cite[Theorem X.17]{RS}. We define
$H^{a_{\min}}=0$ on $\C$. 
 The thresholds of $H$ are the  eigenvalues of
    sub-Hamiltonians, that is more precisely, the set of thresholds is
    given by 
\begin{align}
  \label{eq:thres}
  \vT = \cup_{a\in\vA, \# a\ge 2} \;\sigma_{\pupo}( H^a).
\end{align} This definition of thresholds conforms with the notion
of thresholds for usual $N$-body operators, and 
we recall that  the so-called HVZ theorem  asserts that the essential spectrum of $H$ is given by
the formula
\begin{equation*}
  \sigma_{\ess}(H)= [\min \vT,\infty),
\end{equation*}  cf. 
\cite[Theorem~XIII.17]{RS}. We proved in \cite{IS} that $\vT$ is closed and
countable. 
It is also known that under rather general conditions $H$ does
not have positive eigenvalues and the  negative non-threshold eigenvalues can at most
accumulate at the thresholds and only from below, cf. \cite{FH, IS, Pe}.

  In
\cite{IS} we proved a version of the Mourre
estimate using constructions of \cite{De, Gr}. This is more precisely
given in terms of  the (rescaled) Graf vector field 
 $\tilde \omega_R(x)=R\tilde \omega(\frac xR)$   and the corresponding
  operator
  \begin{align}\label{eq:oldA}
    A=A_R=\tfrac12\parb{\tilde \omega_R(x)\cdot p+p\cdot \tilde \omega_R(x)};\;R>1,\,p=-\i\nabla.
  \end{align} We    introduce a  function $d:\R\to\R$  by
  \begin{equation}\label{eq:44k}
    d(E)=\begin{cases}\inf _{\tau\in \vT (E)}(E-\tau),\;\vT
      (E):=\vT\cap \,]-\infty,E]\neq \emptyset,\\
      1,\;\vT
      (E)=\emptyset,
    \end{cases}
  \end{equation} where $\vT$ is given by \eqref{eq:thres}. 

  \begin{subequations}
\begin{lemma}\label{lemma:Mourre1_hard} For all $\kappa>0$ and compact 
    $I\subset\R$ with $I\cap\vT=\emptyset$ there
    exists $R_0>1$ such that for all $R\geq R_0$ and all $E\in I$:
    There exists  a
    neighbourhood $\vU$ of $E$ and a compact operator $K$ on
    $\vH$ such that
    \begin{equation}\label{eq:40kBB}
      f(H)^*\i [H,A_R]f(H)\geq f(H)^*\{2d(E)-\kappa
      -K\}f(H)\text { for all real }
      f\in C^\infty_{\c}(\vU).
    \end{equation}
  \end{lemma}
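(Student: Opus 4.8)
The plan is to derive the stated estimate from the abstract Mourre estimate for generalized hard-core $N$-body Hamiltonians established in \cite{IS}. Recall that there one obtains, for a non-threshold energy $E$, a neighbourhood $\vU$ of $E$, a constant $c>0$, and a compact operator $\widetilde K$ on $\vH$ such that
\begin{equation*}
  f(H)^*\,\i[H,A_R]\,f(H)\ \geq\ f(H)^*\{c-\widetilde K\}f(H)
\end{equation*}
for all real $f\in C^\infty_{\c}(\vU)$. The content of the present lemma is to identify the optimal constant: $c$ can be taken arbitrarily close to $2d(E)$, uniformly for $E$ in a fixed compact non-threshold set $I$ and for all sufficiently large $R$. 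First I would recall the construction of the Graf vector field $\tilde\omega$ and the identity, valid up to lower-order and compact terms, expressing $\i[H,A_R]$ in terms of the second derivatives of $\tilde\omega_R$ and the inter-cluster parts of the momentum; the key geometric input is that $\tilde\omega$ is convex, agrees with $x$ outside a compact set, and that in each conical neighbourhood of a collision plane $\bX_a$ its Hessian degenerates precisely along $\bX^a$, so that the commutator is bounded below by $2H_a$-type expressions plus the channel energies, which on the spectral subspace $f(H)$ with $f$ supported near $E$ forces the lower bound $2d(E)$.

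Next I would run the standard partition-of-unity/induction-on-subsystems argument of \cite{De,Gr}, as adapted to the hard-core setting in \cite{IS}. One covers $\bX$ by a Graf-type partition $\{J_a^2\}$ subordinate to the cluster structure; on the support of $J_a$ one has $\tilde\omega_R(x)\cdot p + p\cdot\tilde\omega_R(x)$ comparable to $x_a\cdot p_a + p_a\cdot x_a$ (the dilation generator in the inter-cluster variables) plus $\tilde\omega^a$-terms in the internal variables, and one invokes the inductive Mourre estimate for the sub-Hamiltonian $H^a$ on $L^2(\Omega^a)$ together with the fact that the inter-cluster kinetic energy $p_a^2\geq 0$. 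Localizing the total energy near $E$ via $f(H)$ and using that the internal energy available to $H^a$ is at least $E-\tau$ for the nearest threshold $\tau\leq E$, one accumulates the factor $2(E-\tau)\geq 2d(E)$ from each patch; the error terms from commuting $J_a$ through $H$ and through $f(H)$, and from the hard-core boundary, are all compact (here Condition \ref{cond:smooth2} and the Dirichlet form structure enter to guarantee that $V_b^{(2)}$ and the boundary contributions are relatively compact) and can be absorbed into $K$, while the choice of $R_0$ large makes the remaining smooth long-range errors of size $O(R^{-\varepsilon})<\kappa/2$ uniformly on $I$.

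The main obstacle I expect is twofold. First, making the constant $2d(E)$ genuinely uniform in $E\in I$: since $d$ is only lower semicontinuous, one must choose the neighbourhoods $\vU=\vU(E)$ and the compact operators $K=K(E)$ with care so that the estimate does not degrade as $E$ approaches a threshold from above within $I$ — this is handled by working with $d(E)$ rather than the exact spectral gap and by a compactness argument covering $I$, but the bookkeeping is delicate. Second, and more seriously, controlling the hard-core boundary terms: when commuting $A_R$ with $H^a$ one picks up boundary contributions on $\partial\Omega^a$, and one must verify that the Graf field $\tilde\omega_R$ can be chosen tangential to (or at least well-behaved near) the obstacle boundaries so that these terms are nonnegative or compact; this is exactly the technical heart of \cite{IS} and I would cite it rather than reprove it. Once these points are in place, combining the patchwise estimates via the IMS-type localization formula and absorbing all errors yields \eqref{eq:40kBB} with $2d(E)-\kappa$ on the right-hand side.
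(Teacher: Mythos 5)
Your plan is consistent with the paper's treatment: the paper does not prove Lemma \ref{lemma:Mourre1_hard} at all but quotes it from \cite{IS}, where it is established in exactly this form (constant $2d(E)-\kappa$, uniformly in $E\in I$ and $R\geq R_0$), and your sketch --- Graf vector field, partition of unity with induction on subsystems following \cite{De,Gr}, compact error terms, and deferral of the hard-core boundary/completeness issues to \cite{IS} --- is precisely the strategy used there. The only quibble is your framing that the lemma sharpens a cruder constant from \cite{IS}: the statement, including the optimal constant and the uniformity in $R$ and $E$, is already the result proved in \cite{IS} (with $\tilde\omega$ equal to the identity only away from a neighbourhood of the collision planes, not outside a compact set), so citing it directly, as the paper does, suffices.
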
 The rescaled Graf vector field $\tilde \omega_R$ is complete on
$\Omega$. The Graf vector field is a gradient vector field, $\tilde
\omega=\nabla r^2/2$ for some positive function $r$. We also note that
by definition
the ``commutator''  $\i[H,A_R]$ is given by its formal expression
\begin{equation}\label{eq:mourre comm}
  \i [H,A_R]=p\tilde \omega_*(x/R)p-(2R^2)^{-1}\parb{\Delta^2
    r^2}(x/R)-\tilde \omega_R\cdot \nabla V.
\end{equation}    
  \end{subequations}

\subsection{Rellich's theorem in the generalized framework}\label{sebsec:result}  We need to be precise about the meaning of 
generalized eigenfunctions in $B_0^*$ fulfilling the Dirichlet
boundary condition: We say $\phi\in B_0^*$ (meaning that
\eqref{eq:B0} is fulfilled for the function $\phi: \Omega\to \C$) is a {\it generalized Dirichlet eigenfunction} with
eigenvalue  $E$ if for all $n\geq 1$ the function $\chi_n(|x|) \phi\in
H^1_0$ and in the distributional sense $(H-E)\phi$. Here $\chi\in
C^\infty(\R)$ is real-valued,
\begin{align}
\chi(t)
=\left\{\begin{array}{ll}
1 &\mbox{ for } t \le 1, \\
0 &\mbox{ for } t \ge 2,
\end{array}
\right.
\label{eq:14.1.7.23.24}
\end{align}  and $\chi_n(t)=\chi(t/n)$.  The main result 
of this paper reads
\begin{thm}\label{thm:priori-decay-b_0} Suppose Condition
  \ref{cond:smooth2}. Any generalized Dirichlet eigenfunction in $B_0^*$  with
   a real 
non-threshold eigenvalue  is in $\vH=L^2(\Omega)$.
\end{thm}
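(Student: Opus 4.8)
The strategy is the standard bootstrap via the Mourre estimate of Lemma~\ref{lemma:Mourre1_hard}, organized so that it works for the optimal Besov space $B_0^*$ and for the singular potentials allowed by Condition~\ref{cond:smooth2}. Throughout, fix a generalized Dirichlet eigenfunction $\phi\in B_0^*$ with non-threshold eigenvalue $E$, and fix $\kappa>0$ small, then $R\geq R_0$ and a neighbourhood $\vU\ni E$ so that \eqref{eq:40kBB} holds with $2d(E)-\kappa>0$; choose a real $f\in C_\c^\infty(\vU)$ with $f\equiv1$ near $E$. The first step is a regularization/approximation issue: since $\phi$ is only a generalized eigenfunction, $A_R\phi$ and expressions like $f(H)^*\i[H,A_R]f(H)\phi$ are not \emph{a priori} meaningful. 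I would introduce the cutoffs $\chi_n(|x|)$ from \eqref{eq:14.1.7.23.24}, use that $\chi_n\phi\in H^1_0$ and that $(H-E)\phi=0$ distributionally, and work with the regularized observables $A_R^\delta = A_R(1+\i\delta A_R)^{-1}$ (or $g(A_R/n)$-type spatial localizations built from the Graf field), taking $\delta\to0$ and $n\to\infty$ in the right order. The key quantitative input is that $\phi\in B_0^*$ forces the boundary/remainder terms generated by commuting the cutoffs through $H$ — which are supported in the shell $B_{2n}\setminus B_n$ — to vanish as $n\to\infty$, precisely because of the defining condition $n^{-1}\|1_{B_n}\phi\|^2\to0$.

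The heart of the argument is the positive-commutator estimate. One computes the ``virial'' quantity $\Im\langle A_R^\delta \phi_n,(H-E)\phi_n\rangle$ with $\phi_n=\chi_n\phi$; on one hand this is controlled (via $(H-E)\phi=0$) by error terms living in the shell that tend to zero by the $B_0^*$ hypothesis, and on the other hand it equals, up to such errors, $\langle\phi_n,\i[H,A_R]\phi_n\rangle$ with $\i[H,A_R]$ given explicitly by \eqref{eq:mourre comm}. Sandwiching with $f(H)$ — legitimate because $(H-E)\phi=0$ and $f\equiv1$ near $E$ makes $(1-f(H))\phi$ negligible after the cutoffs — and invoking \eqref{eq:40kBB} yields an inequality of the form
\begin{align*}
(2d(E)-\kappa)\,\|1_{B_n}^{1/2}\!\ \text{(weighted)}\ \phi\|^2 \le o(1)+\langle \text{compact }K\phi_n,\phi_n\rangle + \text{(lower order)}.
\end{align*}
The compact term $K$ is handled by a separate Deift--Simon / energy bound showing $\phi_n$ is locally $H^1$-bounded and converges weakly to $0$ in the relevant sense away from bounded sets (using that $\phi$ has no $L^2$ mass to begin with, or by an iteration), so $\langle K\phi_n,\phi_n\rangle\to0$. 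The outcome is that the ``outgoing/radial derivative'' part of $\phi$, measured in a $B^*$-type norm, actually lies in a \emph{smaller} space — one gains decay $|x|^{-\eta}$ for some $\eta>0$ over $B_0^*$. Iterating this gain finitely many times (each step feeding an improved weight back into the virial identity, with the potential errors from Condition~\ref{cond:smooth2}\ref{item:cond12b}--\ref{item:cond122} absorbed because $V_b^{(1)}$ decays like $|y|^{-\varepsilon}$ and $V_b^{(2)}$ is compact hence lower order) pushes $\phi$ into $L^2_{-1/2}\subset L^2_{-1/2-\varepsilon'}$, and finally one upgrades $B_0^*\cap(\text{local }H^1)$ with a genuinely summable weight to conclude $\phi\in L^2(\Omega)=\vH$; the exponential decay then follows from the already-cited results of \cite{IS}.

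\textbf{Main obstacle.} I expect the principal difficulty to be not the Mourre estimate itself (granted by Lemma~\ref{lemma:Mourre1_hard}) but the interplay between (i) the \emph{optimality} of $B_0^*$ — there is essentially no room: one cannot afford any $\epsilon$-loss in the weights, so the commutator bookkeeping must be done with the sharp $n^{-1}\|1_{B_n}\phi\|^2\to0$ gain and no more — and (ii) the \emph{singularities}: the Dirichlet boundary $\partial\Omega$ (where $A_R$-commutators may produce boundary terms that must be shown to vanish, using that $\chi_n\phi\in H^1_0$ and that the Graf field $\tilde\omega_R$ is tangential/complete on $\Omega$) together with the Coulomb-type local singularities of $V_b$, which are only form-compact perturbations and therefore cannot be differentiated pointwise in $\tilde\omega_R\cdot\nabla V$. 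The remedy for (ii) is to keep everything at the level of quadratic forms, never writing $\i[H,A_R]$ as an operator identity but rather as a form identity on $H^1_0$ localized by $\chi_n$, and to treat $V_b^{(2)}$ (and the short-range part of $V_b^{(1)}$) as relatively form-bounded errors that are swallowed by $\kappa$; the remedy for the boundary is the observation, recorded after Lemma~\ref{lemma:Mourre1_hard}, that $\tilde\omega=\nabla(r^2/2)$ and $\tilde\omega_R$ is complete on $\Omega$, so the flow preserves the Dirichlet condition and no spurious surface integral appears. Making the bootstrap terminate — i.e. proving that finitely many decay increments suffice to reach an $L^2$ weight — requires a lower bound on the gain $\eta$ uniform along the iteration, which comes from keeping $2d(E)-\kappa$ bounded below; this is the place where the non-threshold hypothesis $d(E)>0$ is used decisively.
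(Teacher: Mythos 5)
You have a genuine gap, and it sits exactly where the theorem is hard. Your plan is to run the virial identity for the Graf conjugate operator $A_R$ itself (regularized as $A_R(1+\i\delta A_R)^{-1}$), cut off by $\chi_n$, and to claim that the shell remainders vanish because $n^{-1}\|1_{B_n}\phi\|^2\to 0$. This does not work: $\tilde\omega_R(x)$ grows linearly, so on the shell $B_{2n}\setminus B_n$ the operator $A_R$ acts like $rB\sim nB$, while $\bD\chi_n=\chi_n'B-\tfrac{\i}{2}\omega^2\chi_n''$ is only of size $n^{-1}$ there; the product is $O(1)$ on the shell, and the remainder is of order $\|1_{B_{2n}\setminus B_n}\phi\|^2$ (plus local $H^1$ norms controlled by the eigenequation), which for $\phi\in B_0^*$ is $o(n)$, not $o(1)$. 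Since the Mourre-positive side only bounds $\|f(H)\phi_n\|^2$, which is itself $o(n)$, the resulting inequality is the hypothesis $\phi\in B_0^*$ back again: no gain of $|x|^{-\eta}$ is produced, and your iteration never starts. (Working at fixed $\delta$ does not help: the Mourre estimate \eqref{eq:40kBB} is for $\i[H,A_R]$, and transferring positivity to $\i[H,A_R(1+\i\delta A_R)^{-1}]$ uniformly as $\delta\to0$ in the state $\phi$ requires precisely the weighted bounds one is trying to prove.) The paper's proof avoids this by never using $A_R$ as the propagation observable. It uses the \emph{bounded} observable $P=f(H)X_\kappa^{1/4}\zeta_\epsilon(B)X_\kappa^{1/4}f(H)$ built from $B=\tfrac12(\omega\cdot p+p\cdot\omega)$ with the normalized Graf field $\omega=\tilde\omega_R/r$, so that $BP$ is bounded and the right-hand side of \eqref{eq:virial} is $O(n^{-1})\|1_{B_{2n}}\phi\|^2=o(1)$; the Mourre estimate \eqref{eq:mour7d} enters only through the identity \eqref{eq:limit}, $\bD B=r^{-1/2}(\bD A-B^2)r^{-1/2}+r^{-2}v$, and the decay gain comes from the interplay, via the specially constructed cutoff $\eta_\epsilon$ satisfying \eqref{eq:fundest}, between the term $B\zeta_\epsilon(B)$ produced by differentiating the weight $X_\kappa^{1/4}$ and the terms $\sigma\eta_\epsilon'(\pm B)$ produced by \eqref{eq:mour7d}. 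This yields the lower bound $\tfrac14X_\kappa^{-1/2}Y_\kappa^2$, hence $\phi\in L^2_{-1/4}$ after $\kappa\to0$, and a second pass with $X_\kappa^{1/2}$ (using the $L^2_{-1/4}$ bound to control errors) gives $\phi\in L^2$. Your proposal contains no substitute for this mechanism, which is the actual content of the proof.

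A second, independent error is your treatment of the compact operator $K$: the claim that $\langle K\phi_n,\phi_n\rangle\to 0$ because $\phi_n$ ``converges weakly to $0$ \ldots\ using that $\phi$ has no $L^2$ mass'' is untenable --- $\phi_n\to\phi\neq0$ locally, and for $\phi\notin L^2$ this quantity need not converge, let alone vanish. In the correct argument the compact term is not made to vanish: one fixes $m$ so that $\|K-\chi_mK\chi_m\|$ is small enough to be absorbed into the Mourre constant $\sigma$, and the contribution of $\chi_mK\chi_m$ is accepted as a finite constant $C\|X^{-3/4}\phi\|^2$ (finite because $B_0^*\subseteq L^2_{-1/2-\epsilon}$); the conclusion is then the uniform-in-$\kappa$ bound $\|X_\kappa^{-1/4}Y_\kappa\phi\|^2\leq C\|X^{-3/4}\phi\|^2$. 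Your peripheral remarks (working at the form level on $H^1_0$, tangency/completeness of the Graf field, use of the non-threshold hypothesis through $d(E)>0$) are consistent with the paper, but they are the easy part; without the bounded weighted observable and the $\eta_\epsilon$-inequality \eqref{eq:fundest} (or some equivalent device), the proof does not go through.
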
 We refer the reader to the proof of \cite[Corollary 1.8]{IS}
to see how Proposition \ref{thm:Rehard} follows from Theorem
\ref{thm:priori-decay-b_0}. Note that it follows from \cite[Appendix
C]{IS} that non-threshold $L^2$-eigenfunctions have exponential
decay. This is a consequence of Lemma \ref{lemma:Mourre1_hard}. Our 
proof of Theorem
\ref{thm:priori-decay-b_0} is also based on Lemma
\ref{lemma:Mourre1_hard}, however we shall proceed very
differently. Whence exponential
decay of  non-threshold  $ B_0^*$-eigenfunctions is obtained by a
combination of methods. A  uniform approach seems out of reach.

\section{Proof of Theorem
\ref{thm:priori-decay-b_0}}\label{sec:Proof}
We introduce and discuss various preliminaries  needed in the proof. It is easy to see that operators of the form $f(H)$, $f\in
C^\infty_\c(\R)$, preserve any of the weighted $L^2$-spaces and Besov
spaces introduced in Section \ref{sec:introduction}, cf. Lemma
\ref{lem:3.5} stated below and \cite[Theorem 14.1.4]{Ho2}. In particular $f(H)$ is a well-defined
operator on $B_0^*$. It is also easy to see that for any generalized
Dirichlet eigenfunction $\phi\in B_0^*$ with eigenvalue  $E$ we have
$f(H)\phi=f(E)\phi$. Whence if $f(E)=1$ we have the representation
$\phi= f(H)\phi$.

The proof of Theorem
  \ref{thm:priori-decay-b_0} will  rely on 
  Lemma \ref{lemma:Mourre1_hard}.   We
  abbreviate $A=A_R$ in Lemma \ref{lemma:Mourre1_hard}  using  the
  result  at
  a given non-threshold $E$ leading to  the
  following form  of the estimate
\begin{align}\label{eq:mour7d}
      f(H)\i [H,A]f(H)\geq f(H)\parb{\sigma-K}f(H),
    \end{align} where $\sigma>0$, $f\in C^\infty_\c(\R)$ is
    real-valued with   $f(E)=1$  and  $K$
    is compact.  Below the positivity of $\sigma$ will be crucial, but
    its size will not have importance.

Another ingredient of our proof is the operator
\begin{align*}
  B=\tfrac12\parb{\omega(x)\cdot p+p\cdot \omega(x)},
\end{align*} 
where $\omega=\omega_R=\tilde \omega_R/r_R$, $r_R(x)=Rr(\tfrac x R)$. Recall for comparison
that
\begin{align*}
  2A=\tilde\omega_R(x)\cdot p+p\cdot \tilde \omega_R(x)=r_R\omega_R\cdot p+p\cdot r_R\omega_R.
\end{align*} 
We shall suppress the dependence of the parameter $R$ (which is
considered  as a fixed large number). In particular we shall slightly abuse the
notation  writing for example $r$ rather than the rescaled
version $r_R$. Using the notation ${\bf D}$ for the Heisenberg
derivative $\i[H, \cdot ]$ we note the formal computations $B={\bf D}
r$, $A=r^{1/2}Br^{1/2}$ and
 \begin{align}\label{eq:formulasy}
 {\bf D}B=r^{-1/2}\parb{{\bD} A -B^2}r^{-1/2} +O(r^{-3}).
\end{align} 
Here the function
\begin{align*}
  O(r^{-3})=
\tfrac14 \omega\cdot (\nabla^2r)\omega/r^2=r^{-3}v(x),
\end{align*}
  where $v$ belongs to the algebra $\vV$ of smooth functions   on $\Omega$
obeying
\begin{equation*}
    \forall \alpha\in \N_0^{\dim \bX}\mand
    k\in \N_0: |\partial_x^\alpha(x\cdot
    \nabla)^k v(x)|\leq C_{\alpha,k}.
  \end{equation*}  This is due to the fact that  the function
  $r^2-x^2\in 
  \vV$, cf. \cite[(2.1b)]{IS}.  We note that
  the  expression \eqref{eq:mourre comm} for ${\bD} A$  takes
  the form 
  \begin{align*}
    {\bD} A=\sum_{|\alpha\leq 2}v_\alpha p^\alpha;\,v_\alpha\in \vV,
  \end{align*} which make sense as a bounded form on
  $\vH^1=Q(H)=H^1_0$. For computations we will need the following
  rigorous version
  of \eqref{eq:formulasy}, cf. \cite[Lemma A.8]{IS}: In the sense of
  strong limit in the space of bounded operators $\vB(\vH^1, \vH^{-1})$
  \begin{align}
    \label{eq:limit}
     {\bf D}B:=\slim_{t\to 0} t^{-1}\parb{H\e^{\i tB}-\e^{\i tB}H}=r^{-1/2}\parb{{\bD} A -B^2}r^{-1/2} +r^{-2}v,
  \end{align} where $v \in\vV$. Here we use that $\e^{\i
    tB}\in\vB(\vH^1)\cap\vB(\vH^{-1})$, cf. \cite[Lemma A.6]{IS}. With a similar
  interpretation it follows that $\ad_B({\bf D}B)=[{\bf D}B,B]\in\vB(\vH^1, \vH^{-1})$.
   Although this will not be needed in fact  all higher order repeated commutators $\ad^k_B({\bf D}B)$
  exist in this sense.

The  above computations  were used in \cite{GIS}, in fact our 
  proof of  
  Theorem \ref{thm:priori-decay-b_0} will to a large extent rely  on
  ideas from \cite{GIS}  similarly  to
  \cite{Is}.  On the more technical level a certain part of Isozaki's
  proof  also relies on ideas of \cite{FH},  whereas the analogous
  difficulty at the threshold  zero for a one-body problem was treated in 
  \cite{Sk} using a propagation of singularity result. We will present a new
 approach  based on a conveniently chosen partition
  of unity.  

  As in \cite{GIS} phase-space localization in terms of functional
  calculus of the operators $B$ and $H$ will be important. If $f\in
  C^\infty_\c(\R)$ (or for example $f=\chi$) the operator $f(B)$
  preserves the spaces mentioned at the beginning of this section  (like $f(H)$ does), and more
  generally $f(B)$ and $f(H)$ enjoy good properties regarding
  commutation with functions of $r$. These properties are studied in
  \cite[Section 2]{GIS} and will be used frequently below, however
  our presentation will be self-contained. It is   based on an 
  abstract result
  in Appendix
  \ref{sec:Functional calculus}.

We need a  cut-off function $\eta_\epsilon\in C^\infty(\R)$ with
special 
properties: The parameter $\epsilon>0$ is considered small, and we
define $\eta_\epsilon(x)=\tfrac 1\epsilon\eta(\tfrac x\epsilon)$, where $\eta'(x)> 0$ for
$|x|<1$,  $\eta(x)=0$ for $x\leq -1$ and $\eta(x)=1$ for $x\geq 1$. We can
construct $\eta$ such  that $\eta'$ is even, $\sqrt{\eta'}\in C^\infty(\R)$ and   for some 
$c>0$ 
\begin{align}
  \label{eq:constr}
  \eta'(x)\geq c\,\eta(x)\text{ for }x\in(-1,1/2].
\end{align} The optimal  choice of such $c$ (a
necessary condition is $c\leq 2\ln 2$) is not important for us
since we will only need \eqref{eq:constr}  in the following disguised
form: For any $\tilde c >0$ and all $\epsilon$ small enough ($\epsilon^2\leq \tfrac 23 c\tilde c$ suffices)
\begin{align}
  \label{eq:fundest}
  (\tfrac \epsilon2-b)\eta_\epsilon(b)\leq \tilde c
  \,\eta'_\epsilon(b)\text{ for all }b\in \R.
\end{align}
Note also that since  $\eta_\epsilon'$ is even
$\epsilon^{-1}=\eta_\epsilon(b)+\eta_\epsilon(-b)$. We shall use the function  $\zeta_\epsilon(b)=\eta_\epsilon(b)-\eta_\epsilon(-b)$.

Let $h_\kappa(r)=\tfrac r {1+\kappa r}$ for
    $\kappa\geq 0 $,  and let $X_\kappa$  and $Y_\kappa$  be the operators  of
multiplication by $h_\kappa$ and $\tfrac 1 {1+\kappa r}$,
respectively. Writing    $X=X_0$ we note that $X_\kappa=X
Y_\kappa$. Note also that $\nabla h_\kappa(r)= (1+\kappa r)^{-2}\omega$,
whence for example $\
\bD {X_\kappa}={Y_\kappa}B{Y_\kappa}$.

\begin{proof}[Proof of Theorem \ref{thm:priori-decay-b_0}]
  Let  $\phi\in B_0^*$ be  a  given generalized Dirichlet
  eigenfunction  with a non-threshold eigenvalue  $E$. We    shall first  show
  that $\phi\in L^2_{-1/4}$ and then use this property to show by a
  similar procedure  
  that $\phi\in L^2$. Write $\phi=f(H)\phi$ where $f\in
  C^\infty_\c(\R)$, $f(E)=1$ and \eqref{eq:mour7d} holds.

\Step {Step I} We outline the ideas of the proof of the property
$\phi\in L^2_{-1/4}$. We shall consider the ``propagation observable''
    \begin{align*}
      P=P_{\kappa}=f(H) X_\kappa^{1/4}\zeta_\epsilon(B)X_\kappa^{1/4}f(H),\,\kappa>0.
\end{align*} The parameter   $\epsilon>0$ will be 
 fixed shortly   small enough. Note that $X_\kappa$ and  $P_\kappa$ are  bounded due to the appearance of the
factor $\tfrac 1 {1+\kappa r}$. Eventually this factor will  be removed by
letting $\kappa\to 0$.
 More precisely we  shall demonstrate some ``essential  positivity''
 of $\bD P$ persisting in
 the
 $\kappa\to 0$ limit. For any  $n$ the function  $\phi_n=\chi_n(r)\phi\in H^1_0$, $(H-E)\phi_n=-\i
 (\bD \chi_n)\phi$ and whence the expectation 
 \begin{subequations}
 \begin{align}\label{eq:virial}
   \inp{\bD P}_{\phi_n}=-2\Re\inp{( \bD \chi_n)P\chi_n}_\phi.
 \end{align}
 Since $\phi\in B_0^*$ the term  to the right vanishes as $n\to
 \infty$, so it remains to study  the  term to the left in this limit. We compute
 \begin{align*}
   4\bD
   X_\kappa^{1/4}={Y^2_\kappa}X_\kappa^{-3/4}B+\tfrac3{8}
\i \omega^2 X_\kappa^{-7/4} Y_\kappa^4+
\i \kappa\omega^2 X_\kappa^{-3/4} Y_\kappa^3.
 \end{align*} With  commutation errors this should give
 \begin{align}\label{eq:com1}
   \begin{split}
   &2\Re\parbb{\parb{\bD
   X_\kappa^{1/4}}\zeta_\epsilon(B)X_\kappa^{1/4}}\\&= 
      \tfrac12Y_{\kappa}X_\kappa^{-1/4}B\zeta_\epsilon(B)X_\kappa^{-1/4}Y_{\kappa}+
   X^{-3/4}O(\kappa^{0})X^{-3/4}.  
   \end{split}
 \end{align} 

Similarly, letting $\theta_\epsilon=\sqrt{\eta'_\epsilon}$,
\begin{align}\label{eq:com2}
\begin{split}
  X_{\kappa}^{1/4}(\bD \zeta_\epsilon(B))X_{\kappa}^{1/4}
\approx X_{\kappa}^{1/4}\parb{\theta_\epsilon(B)(\bD B)\theta_\epsilon(B)+\theta_\epsilon(-B)(\bD B)\theta_\epsilon(-B)}X_{\kappa}^{1/4}.
\end{split}
  \end{align} Now we insert \eqref{eq:formulasy}
  into this expression  and then estimate by
  \eqref{eq:mour7d} (here ignoring  factors of $f(H)$). Ignoring
   the contribution from the compact
  operator $K$ in \eqref{eq:mour7d} (which  should be controllable in the state
  $\phi_n$)  we should 
  end up with the effective lower bounds
  \begin{align}\label{eq:lowbnd0}
\begin{split}
    &X_{\kappa}^{1/4}(\bD
    \zeta_\epsilon(B))X_{\kappa}^{1/4}\\&\geq(\sigma-\epsilon^2)X_\kappa^{1/4}X^{-1/2}\parbb{\eta'_\epsilon(B)+\eta'_\epsilon(-B)}X^{-1/2}X_\kappa^{1/4} +
   X^{-3/4}O(\kappa^{0})X^{-3/4}\\&\geq (\sigma-\epsilon^2) Y_{\kappa}X_\kappa^{-1/4}\parbb{\eta'_\epsilon(B)+\eta'_\epsilon(-B)}X_\kappa^{-1/4}Y_{\kappa}+
   X^{-3/4}O(\kappa^{0})X^{-3/4}.
\end{split}
  \end{align} This suggests we should have    $\epsilon$ so small  that
  $2\epsilon^2< \sigma$ since then
   $\sigma-\epsilon^2\geq \sigma/2$. In total we  are  lead to
 consider
 \begin{align*}
   \tfrac12Y_{\kappa}X_\kappa^{-1/4}\parb{B\eta_\epsilon(B)-B\eta_\epsilon(-B)+\sigma\eta'_\epsilon(B)+\sigma\eta'_\epsilon(-B)}X_\kappa^{-1/4}Y_{\kappa}.
 \end{align*} Using \eqref{eq:fundest} with $\tilde c=\sigma$ and 
  any  possibly smaller 
 $\epsilon$, whenceforth considered fixed,  we arrive  at the lower bound
\begin{align*}
 \tfrac{\epsilon} 4Y_{\kappa}X_\kappa^{-1/4}\parb{\eta_\epsilon(B)+\eta_\epsilon(-B)}{X_\kappa}^{-1/4}Y_{\kappa}=\tfrac{1} 4X_\kappa^{-1/2}Y^2_{\kappa}.
\end{align*} 
\end{subequations}

Whence,  given that    error terms can  be controlled, we  obtain from these
arguments  the uniform bound
\begin{align}\label{eq:fundests}
  \|X_\kappa^{-1/4}Y_{\kappa}\phi\|^2=\lim_{n\to \infty}\|X_\kappa^{-1/4}Y_{\kappa}\phi_{n}\|^2\leq C_\phi.
\end{align} 
By letting $\kappa \to 0$ in \eqref{eq:fundests} it follows that
$\phi\in L^2_{-1/4}$. The constant $C_\phi$ arises from bounding
errors in \eqref{eq:com1}--\eqref{eq:lowbnd0} as well as from bounding
the contribution from the operator $K$  in \eqref{eq:mour7d} to supplement \eqref{eq:lowbnd0}. As we will
see, in agreement with \eqref{eq:com1} and \eqref{eq:lowbnd0}  it can be
taken of the form $C_\phi=C\|X^{-3/4}\phi\|^2$.

 \Step{Step II} To implement the scheme of Step I  we provide
details on estimating errors from various commutation and the outlined
application of \eqref{eq:mour7d}.

\subStep{Right-hand side  of \eqref{eq:virial}}
We have
\begin{align*}
  \bD \chi_n=\chi'_nB-
\tfrac {\i}2\omega^2\chi''_n,
\end{align*} and therefore using that $BP$ is bounded and that
$\phi\in B_0^*$ indeed
\begin{align*}
  \lim_{n\to \infty}\Re\inp{( \bD \chi_n)P\chi_n}_\phi=0.
\end{align*}

\subStep{\it Left-hand side of \eqref{eq:virial}} We calculate
\begin{align*}
  \bD P=\i f(H)\big[\tilde H,X_\kappa^{1/4}\zeta_\epsilon(B)X_\kappa^{1/4}\big]f(H),
    \end{align*} where $\tilde H=g(H)$ with $g(\lambda)=\lambda
    \tilde f(\lambda)$ and  real-valued $\tilde f\in
    C_\c^\infty(\R)$  chosen such that  $\tilde f(\lambda) =1$ in a neighbourhood of
    the support of $f$. Whence  denoting   ${\tilde \bD}$  the
    Heisenberg derivative $\i[\tilde H, \cdot ]$  we need to
    examine \eqref{eq:com1} and \eqref{eq:com2} with ${ \bD}$
    replaced by ${\tilde \bD}$.

\subStep{\it \eqref{eq:com1} with ${\tilde \bD}$} Based on the
representation \eqref{83a} (applied to $B=H$) we compute 
\begin{align*}
   4\tilde \bD
   {X_\kappa}^{1/4}=g'(H)Y^2_\kappa X_\kappa^{-3/4}B+X^{-3/4}O(\kappa^{0})X^{-1},
 \end{align*} and therefore 
\begin{align*}
  4f(H)\parb{\tilde \bD
   {X_\kappa}^{1/4}}=f(H)Y^2_\kappa X_\kappa^{-3/4}B+X^{-3/4}O(\kappa^{0})X^{-1}.
\end{align*} Using again Lemma \ref{lem:3.5} we then obtain 
\begin{align*}
   \begin{split}
   &2f(H)\Re\parbb{\parb{\tilde \bD
   X_\kappa^{1/4}}\zeta_\epsilon(B)X_\kappa^{1/4}}f(H)\\&=
      \tfrac12f(H)Y_{\kappa}{X_\kappa}^{-1/4}B\zeta_\epsilon(B){X_\kappa}^{-1/4}Y_{\kappa}f(H)+
   X^{-3/4}O(\kappa^{0})X^{-3/4}.  
   \end{split}
 \end{align*}

 \subStep{\it \eqref{eq:com2} and \eqref{eq:lowbnd0} modified} Using
 \eqref{eq:limit} and Lemma \ref{lem:3.5}  we can write
 \begin{align}\label{eq:dif}
\begin{split}
  X_{\kappa}^{1/4}\parb{\tilde \bD \zeta_\epsilon(B)}X_{\kappa}^{1/4}
&= X_{\kappa}^{1/4}\parb{\theta_\epsilon(B)(\tilde \bD
  B)\theta_\epsilon(B)+\theta_\epsilon(-B)(\tilde \bD B)\theta_\epsilon(-B)}X_{\kappa}^{1/4}\\&
   +X^{-3/4}O(\kappa^{0})X^{-3/4}.
\end{split}
\end{align} Indeed by \eqref{83a}  (with interchanged roles of
 $B$ and $H$) and \eqref{eq:limit} 
  \begin{align*}
   \tilde \bD
  B=\int_{\C} (H-z)^{-1} (\bD
  B)(H-z)^{-1}\d \mu(z)\in \vB(\vH),
  \end{align*} and therefore in turn similarly 
\begin{align*}
   [\tilde \bD
  B,B]&=\slim_{t\to 0} \tfrac 1{\i t}\parbb{\parb{\tilde \bD B}\e^{\i tB}-\e^{\i
      tB}\parb{\tilde \bD B}}\\&=\int_{\C} (H-z)^{-1} [\bD
  B,B](H-z)^{-1}\d \mu(z)\\&+2\i\int_{\C} (H-z)^{-1}(\bD
  B)(H-z)^{-1} (\bD
  B)(H-z)^{-1}\d \mu(z)\in \vB(\vH).
  \end{align*} In fact it follows this representation that 
\begin{align*}
   X^s[\tilde \bD
  B,B]X^{2-s}\in \vB(\vH)\text{ for all }s\in \R.
\end{align*} Now we use this property with $s=1$ and  \eqref{86a} with $D_r=0$
  although we could  take $D_r$ to be the
contribution to $\tilde \bD B$ from the last term in
\eqref{eq:limit}. In any case \eqref{eq:dif} follows.

We multiply by $f(H)$ from the left and from the right and commute these
factors next to the factor $\tilde \bD B$ in the middle. 
We note 
 \begin{align*}
  [f(H),X_\kappa^{1/4}]= X^{-3/4}O(\kappa^{0}),
 \end{align*} and therefore we can bound the commutation errors 
 obtaining  
\begin{align*}
\begin{split}
  f(H)&X_{\kappa}^{1/4}\parb{\tilde \bD \zeta_\epsilon(B)}X_{\kappa}^{1/4}f(H)\\
&= X_{\kappa}^{1/4}\parb{\theta_\epsilon(B)f(H)(\tilde \bD
  B)f(H)\theta_\epsilon(B)+\theta_\epsilon(-B)f(H)(\tilde \bD B)f(H)
\theta_\epsilon(-B)}X_{\kappa}^{1/4}\\&+
   X^{-3/4}O(\kappa^{0})X^{-3/4}.
\end{split}
\end{align*} Now we can replace $\tilde \bD $ by $\bD $, use
\eqref{eq:limit}  and implement \eqref{eq:mour7d} after commutation of
factors of $r^{-1/2}$ (recall that $f$ was
chosen with small support so that \eqref{eq:mour7d} applies) and then
move the factors of $f(H)$ back where they came from.
As 
in \eqref{eq:lowbnd0} we then  obtain a lower bound of the form
\begin{align*}
 (\sigma-\epsilon^2)f(H)X_\kappa^{1/4}X^{-1/2}\parbb{\eta'_\epsilon(B)+\eta'_\epsilon(-B)}X^{-1/2}X_\kappa^{1/4}f(H) +
   X^{-3/4}O(\kappa^{0})X^{-3/4}
\end{align*} plus 
the
contribution from $K$  that was ignored in  the
heuristic bound \eqref{eq:lowbnd0}. 
 This contribution  is treated by  first fixing  a  big $m\in \N $ such that (with
$\epsilon$ given as before)
\begin{align*}
 \sigma-\epsilon^2-\|K-\chi_{m}K\chi_{m}\|\geq \sigma/2, 
\end{align*}
  and then noting that the  contribution from
 the operator $\chi_{m}K\chi_{m}$   is bounded by $C\|X^{-3/4}\phi\|^2$.

In total we have proved
\begin{align*}
  \|X_\kappa^{-1/4}Y_{\kappa}\phi\|^2=\lim_{n\to
    \infty}\|X_\kappa^{-1/4}Y_{\kappa}f(H)\phi_{n}\|^2\leq C \|X^{-3/4}\phi\|^2,
\end{align*} leading to the desired conclusion,  $\phi\in L_{-1/4}^2$, by
letting $\kappa\to 0$.

 \Step{ Step III} We complete the proof of the assertion 
$\phi\in L^2$. This part  is very similar to the previous part  and therefore we
leave  out the details of the proof. We  consider the observable
    \begin{align*}
      P=P_{\kappa}=f(H) X_\kappa^{1/2}\zeta_\epsilon(B)X_\kappa^{1/2}f(H),\,\kappa>0,
    \end{align*} where $\epsilon>0$ is chosen with $2\epsilon^2<
    \sigma$ and such that \eqref{eq:fundest} applies with $\tilde
    c=\sigma/2$ (rather than $\tilde c=\sigma$ as before).  Redoing
    the commutator arguments with slight modifications and by using
    that $\phi\in L^2_{-1/4}\subseteq  L^2_{-1/2}$ with the proven bound
\begin{align*}
  \|X^{-1/4}\phi\|^2\leq C\|X^{-3/4}\phi\|^2,
\end{align*}
we obtain the bound
\begin{align}\label{eq:fundestsB}
  \|Y_{\kappa}\phi\|^2=\lim_{n\to \infty}\|Y_{\kappa}f(H)\phi_{n}\|^2\leq C\|X^{-3/4}\phi\|^2.
\end{align} 
By letting $\kappa \to 0$ we deduce that $\phi\in L^2$. 
  \end{proof}

 \appendix

\section{Functional calculus}\label{sec:Functional calculus}

  \begin{subequations}
    The following abstract result is a particular time-independent
    version of \cite[Lemma 3.5]{HS} adapted to the proof of Theorem
\ref{thm:priori-decay-b_0}, although the notation for operators does
not conform with that of the proof of Theorem
\ref{thm:priori-decay-b_0}. There are many similar results in the literature, see for
    example \cite[Section 2]{GIS} or \cite[Appendix C]{DG}. The
    function $\eta$ referred to in \ref{it:C} below is the function
    introduced at \eqref{eq:constr}.
\begin{lemma}\label{lem:3.5}  
  Suppose  $B$ is a self-adjoint operator on a complex Hilbert space
  $\vH$, and that $ {H}$ is a symmetric operator on  $\vH$ with
  domain $
  \mathcal{D}=\mathcal{D}{\left(H\right)}=\mathcal{D}{\left(B\right)}$. Suppose that the commutator form $ i{\left[{H},B\right]}$ defined on $ \mathcal{D}$ is a
  symmetric operator with the same  domain $
  \mathcal{D}$. Let $  
\mathbf{D}=\i{\left[{H},\cdot \right]}$
  Then:
  \begin{enumerate}[A)]
  \item \label{it:A}  For any given $ F\in C^{\infty
    }_{\c}{\left({\R}\right)} $ we let $ \tilde{F}\in C^{\infty }_{ \c
    }{\left({\C}\right)}$ denote an almost analytic extension. In
    particular
  \begin{equation}
    \label{82a} 
    F{\left(B\right)} ={\tfrac {1}{\pi }}\int _{{\C}
    }{\left(\bar{\partial} \tilde{F}\right)}{\left(
        z\right)}{\left(B -z\right)}^{-1}dudv,\quad z=u+
\i v,
\end{equation}
and 
\begin{equation}
\label{83a} 
 \mathbf{D}F{\left(B\right)} =-{\tfrac{1}{\pi }}\int _{{\C}
}{\left(\bar{\partial} \tilde{F}\right)}{\left(
    z\right)}{\left(B -z\right)}^{-1}{\left(\mathbf{D}B\right)}{\left(B-z\right)}^{-1}dud v.
\end{equation}

   In particular if 
${\bD} B$ is bounded then for any $\epsilon >0$ (with $
{\left\langle z\right\rangle }={\left(1+|z|^{
      2}\right)}^{{\tfrac{1}{2}} }$) 
\begin{equation}
\label{8883a} 
\left\|\mathbf{D}F{\left(B \right)}\right\|\leq C_{\epsilon
}\sup_{z\in \C}{\Big({\left\langle z\right\rangle }^{%
      \epsilon +2}|\Im z|^{-2}| {\big(\bar{\partial }\tilde{F}
      \big)}{\left(z\right)}| \Big)}
 \|\mathbf{D}B \|.
\end{equation}

 \item \label{it:B} 
Suppose in addition that we can split $ \mathbf{D}B=D+D_{r}$, where $ D$  and
$D_{r} $ are symmetric operators on $
    \mathcal{D}$,  and that similarly for $k=1$ the form $
    \i^{k}\ad^{k}_{B 
}{\left(D\right)} =\i{\big[\i^{k-1}\ad^{k-1}_{B}{\left(D
      \right)},B\big]} $ defined on $ \mathcal{
  D}$ is a symmetric operator on $\mathcal{D}$. Here by definition $\ad^{0}_{B}
{(D)} =D$,  and we note that 
 the form $ \ad^{2}_{B }(D)$  appearing below makes sense without
 further assumptions. Then the contribution from $ D$ to
(\ref{83a}) can be written as

\begin{align}\label{84a} 
-{\tfrac{1}{\pi }}\int
  _{{\C}} &{\big(\bar{\partial} \tilde{F}\big)}{\left(
      z\right)}{\left(B -z\right)}^{-1}D
{\left(B-z\right)}^{
  -1}dudv\nonumber
\\
&={\tfrac{1}{2}}{\left(
    F^{\prime}{\left(B \right)}D+DF^{\prime}{\left(B\right)}\right)}
+R_{1}{\left(t\right)};\\
R_{1}&={\tfrac{1}{
  2\pi }}\int _{\C}{\big( \bar{\partial}\tilde{F}
\big)}{\left(z\right)}{\left(
B-z\right)}^{
-2} \ad^{2}_{B }{\left(D\right)}
{\left(B-z\right)}^{
  -2}dudv.\nonumber
\end{align}
For all  $ f\in C^{\infty }_{\c}{(\R)}$
\begin{align}\label{85a} 
  & {{\tfrac{1}{2}}{\left(f^{ 2}{\left(B\right)}
        D+D f^{2}{\left(B
          \right)}\right)}}\nonumber
  \\
  &=f{\left( B\right)}Df{\left(B \right)}+R_{2} ;
  \\
  R_{2}{\left(t\right)}&=\tfrac{1}{2\pi ^2}\int _{\C}\int
  _{\C}{\big(\bar{\partial} \tilde{f}\big)}{\left(
      z_{2}\right)}{\big(\bar{\partial }\tilde{f}\big)}
  {\left(z_{1}\right)}{\left(
      B-z_{2}\right)}^{-1}{\left(B-z_{1}\right)}^{-1}\nonumber
\\
&\ad^{ 
    2}_{B}{\left( D\right)}{\left(
      B-z_{1}\right)}^{ -1}{\left(B-z_{
        2}\right)}^{-1}du_{1}dv_{ 1}du_{2}dv_{2}.\nonumber
\end{align}

 \item \label{it:C}  Suppose in addition to the previous assumptions that
the operator $ D_{r}$
extends to a bounded self-adjoint operator. Let $F=\eta$ where $\eta$
is the
function from above. Then there exists an almost analytic extension
with
\begin{equation}
\label{87a} 
\big|{\big(\bar{\partial}\tilde{F}\big)}{\left(z\right)}
\big|\leq C_{k}{\left\langle z\right\rangle }^{%
  -1-k}|\Im z|^{k};\ k\in \N,
\end{equation}
yielding the representation 
\begin{equation}
\label{86a} 
 \mathbf{D}F{\left(B\right)} =F^{\prime{\tfrac{1}{2}}}{\left(
    B\right)}DF^{\prime{\tfrac{1}{2}}
}{\left(B\right)} +R_{1}+R_{2}
+R_{3},
\end{equation}
where $ R_{1}$ is given by (\ref{84a}), $ R_{2}$ by (\ref{85a}) with $
f=\sqrt{F^{\prime}} $ and $ R_{3}$ is the contribution from $ D_{r}$
to (\ref{83a}) (the latter possibly estimated as in \eqref{8883a} with $\epsilon=1$).
\end{enumerate}
\end{lemma}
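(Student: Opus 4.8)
The plan is to derive all three parts from the Helffer--Sjöstrand calculus \eqref{82a} together with the resolvent commutator identity
\[
  [H,(B-z)^{-1}]=-(B-z)^{-1}[H,B](B-z)^{-1}\quad\text{on }\mathcal D ,
\]
which is legitimate for $\Im z\neq0$ because $B$ self-adjoint gives $(B-z)^{-1}:\vH\to\mathcal D(B)=\mathcal D$ while $[H,B]$ acts on $\mathcal D$. For \ref{it:A} I would multiply this identity by $\i$, integrate against $\tfrac1\pi(\bar\partial\tilde F)(z)$, and move the (bounded, symmetric) form $\mathbf D=\i[H,\cdot\,]$ inside the integral by dominated convergence, using $\|(B-z)^{-1}\|\le|\Im z|^{-1}$ and the almost-analyticity bound $|(\bar\partial\tilde F)(z)|\le C_N\langle z\rangle^{-N}|\Im z|^N$; this is \eqref{83a}. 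When $\mathbf D B$ is bounded, the integrand of \eqref{83a} is $\le|(\bar\partial\tilde F)(z)|\,|\Im z|^{-2}\,\|\mathbf D B\|$, and pulling the supremum out of the convergent integral $\int_{\C}\langle z\rangle^{-2-\epsilon}\,dudv$ gives \eqref{8883a}. The only real care here is the bookkeeping of domains, $H$ being merely symmetric.

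For \ref{it:B} the engine is the elementary symmetrisation identity for $R=(B-z)^{-1}$,
\[
  RDR=\tfrac12\bigl(R^2D+DR^2\bigr)+\tfrac12\bigl[R,[D,R]\bigr]=\tfrac12\bigl(R^2D+DR^2\bigr)-\tfrac12\,R^2\,\ad_B^2(D)\,R^2 ,
\]
the second equality being two applications of $[X,R]=-R\,\ad_B(X)\,R$ (with $X=D$, then $X=\ad_B(D)$), legitimate since $\ad_B(D)$ and $\ad_B^2(D)$ are meaningful as forms on $\mathcal D$ and become bounded once flanked by resolvents. Substituting into the $D$-part of \eqref{83a} and using the integration-by-parts identity $\tfrac1\pi\int_{\C}(\bar\partial\tilde F)(z)(B-z)^{-2}\,dudv=-F'(B)$ (which follows from \eqref{82a} with $\partial_z\tilde F$ an almost analytic extension of $F'$) produces exactly $\tfrac12(F'(B)D+DF'(B))+R_1$, i.e. \eqref{84a}. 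For \eqref{85a} I would use, in the same spirit, $\tfrac12(f^2(B)D+Df^2(B))-f(B)Df(B)=\tfrac12[f(B),[f(B),D]]$ and expand each inner commutator by \eqref{82a}; crucially $f(B)$ commutes with every $(B-z)^{-1}$, so only the double commutator $\ad_B^2(D)$ survives, giving the double-resolvent remainder $R_2$. All multiple integrals converge absolutely thanks to $|(\bar\partial\tilde f)(z)|\le C_N\langle z\rangle^{-N}|\Im z|^N$ and the resolvent flanking of $\ad_B^2(D)$.

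For \ref{it:C} I would first fix a convenient almost analytic extension of the (asymptotically constant) function $\eta$, taking, with a large $N$ and a cutoff $\chi\in C^\infty_\c(\R)$ equal to $1$ near $0$,
\[
  \tilde F(u+\i v)=\eta(u)+\chi(v)\sum_{j=1}^{N}\eta^{(j)}(u)\,\frac{(\i v)^j}{j!} .
\]
Since $\eta^{(j)}\in C^\infty_\c(\R)$ for $j\ge1$ and $\eta$ is constant outside a compact interval, $\bar\partial\tilde F$ is compactly supported and bounded, and equals $\tfrac12\eta^{(N+1)}(u)(\i v)^N/N!$ near $\R$; hence \eqref{87a} holds for all $k\le N$. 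With such $\tilde F$ fixed, \eqref{86a} is bookkeeping: split $\mathbf D B=D+D_r$ in \eqref{83a}; the $D_r$-contribution, $D_r$ being bounded, is $R_3$ (estimated by \eqref{8883a} with $\epsilon=1$); the $D$-contribution is handled by \ref{it:B}, giving $\tfrac12(\eta'(B)D+D\eta'(B))+R_1$; and \eqref{85a} applied with the admissible choice $f=\sqrt{\eta'}\in C^\infty_\c(\R)$ — admissible precisely because $\sqrt{\eta'}$ is smooth, by the construction of $\eta$ at \eqref{eq:constr} — rewrites $\tfrac12(\eta'(B)D+D\eta'(B))=\eta'^{1/2}(B)D\,\eta'^{1/2}(B)+R_2$. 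Combining gives \eqref{86a}.

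I expect the main obstacle to be organisational rather than conceptual: arranging the almost analytic extensions (in particular the compactly supported $\bar\partial\tilde F$ for the asymptotically constant $\eta$) and the remainder formulas so that at every stage each operator is well-defined — domains, symmetry of $\i[H,B]$ and of $\ad_B(D)$, and the resolvent flanking that turns the iterated commutators into bounded operators — and that all the Fubini-type interchanges and dominated-convergence passages are justified. Once this scaffolding is in place, parts \ref{it:A}--\ref{it:C} follow by the computations indicated above.
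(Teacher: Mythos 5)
Your proposal is correct and is essentially the argument the paper itself relies on: the paper gives no proof of Lemma \ref{lem:3.5}, citing the time-independent adaptation of \cite[Lemma 3.5]{HS} (see also \cite[Section 2]{GIS}, \cite[Appendix C]{DG}), and those proofs proceed exactly by your Helffer--Sj\"ostrand resolvent-commutator expansion, symmetrization identity and double-commutator remainders. The only point you should state explicitly is that for $F=\eta$ (not compactly supported) the representation \eqref{82a} with your extension holds only up to an additive constant, which is harmless in \eqref{83a} and \eqref{86a} since constants commute with $H$; likewise your finite-$N$ extension gives \eqref{87a} for $k\le N$ only, which suffices as $N$ is arbitrary.
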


  \end{subequations}

\end{document}